\documentclass[runningheads]{llncs}

\usepackage{latexsym,amssymb,amsmath}
\usepackage{amsfonts,amscd,color,epsfig}
\usepackage{pst-all}
%\usepackage{tweaklist}
%\renewcommand{\enumhook}{\setlength{\topsep}{0pt}  \setlength{\itemsep}{-1pt}}

%\renewcommand{\qedsymbol}{\qed}
 %bullets instead of dashes in lists

\usepackage{float}

\floatstyle{ruled}
\newfloat{protocol}{tph}{lop}
\floatname{protocol}{Protocol} \addtolength{\intextsep}{-5pt}
 \addtolength{\textfloatsep}{-10pt}
\providecommand{\abs}[1]{\lvert#1\rvert}

\hyphenation{par-ti-ci-pants} \hyphenation{ASIA-CRYPT}

\begin{document}

\title{Information-Theoretically Secure Voting\\
Without an Honest Majority}
\titlerunning{Information-Theoretically Secure Voting}

\author{Anne Broadbent  \and   Alain Tapp}
% \vspace{15pt}

\institute{D\'epartement d'informatique et de recherche op\'erationnelle\\
 Universit\'e de Montr\'eal, C.P.~6128, Succ.\ Centre-Ville\\
 Montr\'eal (QC), H3C 3J7~~\textsc{Canada}\\
\url{\{broadbea, tappa\}@iro.umontreal.ca}}

%AB author names now appear

%\normalsize\sl D\'epartement d'informatique et de recherche op\'erationnelle\\[-0.1cm]
%\normalsize\sl Universit\'e de Montr\'eal, C.P.~6128, Succ.\ Centre-Ville\\[-0.1cm]
%\normalsize\sl Montr\'eal (QC), H3C 3J7~~\textsc{Canada}\\[.2cm]
%\url{{broadbea, tappa}@iro.umontreal.ca} }
%
%\author{Anne Broadbent \and Alain Tapp}

%\institute{Universit\'{e} de Montr\'{e}al \\
%D\'{e}partement d'informatique et de recherche op\'{e}rationnelle \\
%C.P.\ 6128, Succ.\ Centre-Ville, Montr\'{e}al (Qu\'ebec), H3C 3J7 \textsc{Canada} \\
%\email{\{broadbea,tappa\}\textnormal{@}iro.umontreal.ca}}

%\date{June 13, 2007}

\maketitle

\begin{abstract}
We present three voting protocols with unconditional privacy and
information-theoretic correctness, without assuming any bound on the
number of corrupt voters or voting authorities. All protocols have
polynomial complexity and require private channels and a
simultaneous broadcast channel. Our first protocol is a basic voting
scheme which allows voters to interact in order to compute the
tally. Privacy of the ballot is unconditional, but any voter can
cause the protocol to fail, in which case information about the
tally may nevertheless transpire. Our second protocol introduces
voting authorities which allow the implementation of the first
protocol, while reducing the interaction and limiting it to be only
between voters and authorities and among the authorities themselves.
The simultaneous broadcast is also limited to the authorities.
 As long as a single authority is honest,
the privacy is unconditional, however, a single corrupt authority or
a single corrupt voter can cause the protocol to fail. Our final
protocol provides a safeguard against corrupt voters by enabling a
verification technique to allow the authorities to revoke
incorrect votes.
We also discuss the implementation of a simultaneous broadcast
channel with the use of temporary computational assumptions,
yielding versions of our protocols achieving  everlasting security.  \looseness-1\\
%\vspace{.5cm}
\smallskip

%\\[1ex]
 \textbf{Keywords:} multiparty computation, election protocol,
dining cryptographers, information-theoretic security, election
authorities, ballot verification.
\end{abstract}

\section{Introduction}

Multiparty secure computation enables
a group of~$n$ participants  to collaborate in order to compute a
function on their private inputs. Assuming that private random keys
are shared between each pair of participants,  every function can be
securely computed
if and only if less than~$n/3$
participants are corrupt; this fundamental result is due to  David
Chaum, Claude Cr\'epeau and Ivan Damg\a rard~\cite{CCD88} and to
Michael Ben-Or, Shafi Goldwasser and Avi Wigderson~\cite{BGW88}.
When a broadcast channel is available, the results of Tal Rabin and
Michael \mbox{Ben-Or}~\cite{RB89} tell us that this proportion can
be improved to~$n/2$.

Among all functions that can be computed with these general-purpose
protocols, perhaps the one that has the most obvious application is
voting. If we have a guarantee on the proportion of honest
participants, a secure voting protocol based only on pairwise
private channels can be implemented %AB new
(if, in addition to this, we have a broadcast channel, then we can
tolerate more cheaters). %
Here, we are interested in the case where no such guarantee is
available. The first protocol for voting that is
information-theoretically secure even in a presence of a majority of
dishonest participants was presented in~\cite{BroadbentTapp}. Along
with the use of private communication, the protocol uses a
simultaneous broadcast channel. In this extended abstract, we first
give a new presentation of the original protocol, followed by two
protocols which present significant improvements on the original
one. Although our initial motivation  was of theoretical nature, we
believe that this work may lead to interesting practical
applications.

All three protocols are obtained from two simple yet powerful
observations. First, if the dinning cryptographer's
protocol~\cite{Chaum88} is used to compute the parity function and
is implemented with a simultaneous broadcast channel, then it is
perfect. The second observation is that if a string of $n$ bits is
shared among $n$ participants is such that the parity of the $n$
bits is random (and unknown), then it is impossible for any strict
subset of participants to locally derandomize this parity.

In our first protocol, we assume that each pair of voters is
connected by a private authentic channel. In our second and third
protocols, we relax this assumption by  introducing \emph{voting
authorities}. The assumption then becomes that there are private and
authentic channels only between voters and the authorities and among
the authorities themselves.

All three protocols  require a  simultaneous broadcast
channel~\cite{CGMW85,HD05}, which, for our purpose, is  a collection
of broadcast channels where the input of one participant cannot
depend on the input of any other participant. This could be achieved
if all participants {\em simultaneously} performed a broadcast. In
the context of our second and third protocols, a simultaneous
broadcast among the authorities is sufficient.

It is not uncommon in multiparty computation to allow additional
resources, even if these resources cannot be implemented with the
threshold on the honest participants (the results of~\cite{RB89}
which combine a broadcast channel with~$n/2$ honest participants
being the most obvious example). Our work suggests that a
simultaneous broadcast channel is an interesting primitive to study
in this context. Furthermore, given a resource to implement bit
commitment, we can implement a simultaneous broadcast: all
participants commit to their values, and then all participants open
these values. Since bit commitment can be implemented based on the
laws of relativity~\cite{Kent99} (or more precisely, based on the
postulate that information cannot travel faster then the speed of light),
we conclude that simultaneous broadcast can also be achieved in this
model. It may also be possible to directly implement a simultaneous
broadcast using the laws of relativity.

Since a simultaneous broadcast channel can be achieved using bit
commitment, which itself can be implemented with computational
assumptions, we can replace in all our protocols the use of a
simultaneous broadcast channel with temporary computational
assumptions. Our protocols then provide \emph{everlasting} security:
as long as the computational assumptions are not broken
\emph{during} the execution of the protocol (more precisely, during
the simulation of the simultaneous broadcast), the security of the
protocols is perfect. Note that the privacy of individual votes
remains perfect even if these computational assumptions are broken
during the protocol.

\subsection{Common Features to All Protocols}

Our voting protocols involve $n$~voters, each casting a ballot for a
single choice among $m$~candidates. The goal of the protocols is to
faithfully count the number of ballots in favour of each candidate
in such a way that voter's ballots remain private, honest ballots
are counted and dishonest voters cannot influence the vote any more
than by honestly voting. The protocols we present are based on a
technique presented in \cite{BroadbentTapp}. The first protocol
involves only the voters but the last two involve $r$ voting
authorities. In all protocols, dishonest participants can make the
protocol fail (in our last protocol, only dishonest authorities can
achieve this). All three protocols use probabilistic techniques to
correctly evaluate the tally for each candidate. For this reason,
the protocols are only correct with probability $1-2^{-\Omega(s)}$,
with~$s$ being a chosen security parameter.

We present our protocols in the regular setup  where each voter
casts a ballot with a choice for a single candidate. Our protocols
can easily be adapted to allow any number of voices per ballot
(allowing, for instance, each voter to either choose two candidates,
or to vote twice for the same candidate). We can also add a dummy
candidate to allow voters to honestly cancel their ballots.

\subsection{Summary of Results}

All three protocols are exclusively based on private authentic
channels and a simultaneous broadcast channel. In the first
protocol, no assumption is made on the number of honest voters and
in the last two, the only assumption is that at least one authority
is honest. Under these assumption, our protocols provide perfect
privacy and correctness. This was believed to be
impossible~\cite{Jeroen}. The major drawback is that any dishonest
participant can  make any protocol fail (except in our third
protocol, where only dishonest authorities can make the protocol
fail).

\textbf{Protocols~\ref{prot:vote2}} and \textbf{\ref{prot:vote3}}
make use of voting authorities. If we group the authorities
together, they act as a trusted third party, which means that
collectively they can violate privacy and correctness of the
protocol. However, taken individually, both privacy and correctness
are guaranteed as long as a \emph{single} authority is honest. This
suggests that in practice, authorities could be chosen to represent
different interest groups, with each voter needing to trust only a
single authority (note that it is not necessary for the voters to
trust the \emph{same} authority!).

It is common in multiparty computation to compare an implementation
of a functionality with its \emph{ideal} functionality. This ideal
functionality is represented as a black box, accepting private
inputs from each participant and privately communicating the
function evaluation on these private inputs back to each
participant. We now review the main features of each protocol.

\subsubsection{Basic Voting (section~\ref{sec:vote})}

\begin{itemize}
\item Only voters are involved in the protocol.
\item A coalition of dishonest voters can only learn through the protocol what they would learn in the
ideal functionality, and this even (and also) if the protocol fails.
\item A single dishonest voter can make the protocol fail.
\item If the protocol does not fail, then it is consistent with all ballots of the honest voters and
some assignment of ballots for the dishonest voters.
\item Dishonest voters cannot vote adaptively.
\end{itemize}

\subsubsection{Voting with Authorities (section~\ref{sec:votewithauthorities})}
\begin{itemize}
\item Voters and a small number of authorities are involved in the protocol.
\item Voters only interact with authorities.
\item If at least one authority is honest, a coalition of dishonest voters and authorities
can only learn
what they would learn in the
ideal functionality, and this even (and also) if the protocol fails.
\item A single dishonest voter or authority can make the protocol fail.
\item If at least one authority is honest and if the protocol does not fail, then
it is consistent with all ballots of the honest voters and some
assignment of ballots for the dishonest voters.
\item If at least one authority is honest, a coalition of dishonest voters and authorities cannot vote adaptively.
\end{itemize}

\subsubsection{Voting with Authorities and Verification (section~\ref{sec:votewithverification})}
\begin{itemize}
\item Voters and a small number of authorities are involved in the protocol.
\item Voters only interact with authorities.
\item If at least one authority is honest, a coalition of dishonest voters and authorities
can only learn
what they would learn in the ideal functionality, and this even (and
also) if the protocol fails.
\item No coalition of voters alone can make the protocol fail.
\item A single dishonest authority can make the protocol fail.
\item If at least one authority is honest and if the protocol does not fail, then
it is consistent with all ballots of the honest voters and some
assignment of ballots for the dishonest voters.
\item If at least one authority is honest, a coalition of dishonest voters and authorities cannot vote adaptively.
\item Dishonest voters voting inappropriately will have their ballot revoked.
\item A dishonest authority can choose to revoke the ballot of an honest voter.
\item When a ballot is revoked, all voters and authorities know
about it.
\end{itemize}

%%%%%%%%%%%%%%%%%%%%%%%%%%%%%%%%%%%%%%%%%%%%%%%%%%%%%%%%%%%%%%%%%%%%%%%%%%%%%%%%%%%%%%
\section{Basic Voting Protocol}
\label{sec:vote}

We present a protocol that allows $n$ voters to conduct an
$m$-candidate~vote. First, some notation: we say that participants
share a \emph{distributed bit with value $b$} if each participant
holds a bit and the parity (binary XOR) of all bits is~$b$. Within a
group of $n$ participants, we say that a voter \emph{constructs} a
distributed bit with value~$b$ if he chooses $b_i\in_R \{0,1\}$ such
that $\bigoplus_{i=1}^n b_i=b$ and  sends privately $b_i$ to
participant~$i$. The values $\{b_i\}$ ($i=1,\ldots n)$ are called
\emph{shares}.  For now, voters create distributed bits among
themselves. In sections~\ref{sec:votewithauthorities}
and~\ref{sec:votewithverification},  voters will create distributed
bits among authorities. Our basic protocol is given as
\textbf{Protocol~\ref{prot:vote1}}.

%%%%%%%%%%%%%%%%%%%%%%%%%%%%%%%%%%%%%%%%%%%%%%%%%%%%%%%%%%%%%%%%%%%%%%%%%%%%%%%%%%%%%%
\begin{protocol}[h]
\caption{Basic voting protocol} \label{prot:vote1}
{\bf Input:} $x_i \in \{1,\ldots,m\}$
%AB New
and security parameter $s$ \\
{\bf Output:} for $k=1$ to $m$, $y[k]= \abs{\{x_j \mid x_j=k \}}$

\vspace{4pt} \hrule \vspace{4pt}

\textbf{Phase A (cast)}

For each candidate $k=1$ to~$m$, \vspace{-.25cm}
\begin{enumerate}
\item
Each voter~$i$ sets the value of $n^2s$ bits~$p_{ijk}$ ($j =1,
\ldots ,n^2 s$) in the following way: if~$x_i \neq k$, then all bits
are $0$; otherwise, exactly $ns$ bits (a fraction $1/n$ of the
total) are randomly chosen such that $p_{ijk}=1$ and the rest such
that $p_{ijk}=0$.

\item
For each $j =1, \ldots ,n^2 s$, each voter $i$ constructs a
distributed bit  with value $p_{ijk}$. Let the shares of each
distributed bit be denoted $\{p_{ijk\ell}\}$ ($\ell = 1, \ldots n$)
\end{enumerate}

\textbf{Phase B (broadcast)}

For every $j$ and $k$, each voter $\ell$, computes the parity of all
received bits, $q_{jk\ell}=\bigoplus_{i=1}^n p_{ijk\ell}$. All bits
are then simultaneously broadcast.

\textbf{Phase C (tally)}

To compute the tally, $y[k]$, for each value $k=1, \ldots ,m$, each
voter sets: \mbox{$v[k]_j=\bigoplus_{\ell=1}^n q_{jk\ell}$},
\mbox{$\sigma[k] = \sum_{j=1}^{n^2 s} \frac{v[k]_j}{n^2s}$} and  if
there exists an integer~$v$ such that $\left|\sigma[k] - p_v \right|
< \frac{1}{2e^2n} \,$, where $p_v =
\frac{1}{2}\left(\frac{n-2}{n}\right)^v
\left(\left(\frac{n}{n-2}\right)^v -1\right)$, then $y[k]=v$\,.

If for any~$m$, no such value~$v$ exists, or if $\sum_{k=1}^m y[k]
\neq n$, the protocol fails.

\end{protocol}

The complexity of \textbf{Protocol~\ref{prot:vote1}} is as follows:
$n$ voters each create $mn^2s$ distributed bits, for a total of $n$
messages of size $mn^2s$. \textbf{Phase B} requires a single
simultaneous broadcast among $n$ participants, each sending a
message of size $m n^2 s$.

\begin{lemma}(Correctness)
\label{lem:correctness corrupt} If
\textbf{Protocol~\ref{prot:vote1}} does not fail, the result of the
vote is consistent with the vote of the honest voters and some
non-adaptive choice for the dishonest voters, except with
probability exponentially small in~$s$.
%AB: note: I don't like the position of except above, but don't have an easy solution.
\end{lemma}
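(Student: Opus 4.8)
The plan is to analyze what the broadcast values $q_{jk\ell}$ reveal and show that the reconstructed quantity $\sigma[k]$ concentrates around $p_v$ for the correct $v$. First I would observe that for each fixed $j,k$, the value $v[k]_j = \bigoplus_{\ell=1}^n q_{jk\ell} = \bigoplus_{\ell=1}^n \bigoplus_{i=1}^n p_{ijk\ell} = \bigoplus_{i=1}^n p_{ijk}$, since each distributed bit constructed by voter $i$ has shares XOR-ing to $p_{ijk}$. So $v[k]_j$ is simply the parity of the $n$ bits $p_{1jk},\ldots,p_{njk}$, one per voter. For an honest voter $i$ with $x_i = k$, the bit $p_{ijk}$ is $1$ with probability exactly $1/n$ independently across $j$; for an honest voter with $x_i \neq k$ it is always $0$. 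Thus the XOR $v[k]_j$ is the parity of independent bits, some identically $0$ (honest voters not voting $k$) and $t_k$ of them Bernoulli$(1/n)$ where $t_k$ is the number of honest voters who voted for candidate $k$.

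The next step is to handle the dishonest voters. Because Phase B uses a \emph{simultaneous} broadcast, each dishonest voter's contribution to the shares is fixed before any information about other participants' shares is revealed; moreover each dishonest voter only controls his own collection of bits $\{p_{ijk}\}_j$. Hence a dishonest coalition effectively fixes, for each candidate $k$, some distribution over bit-strings that gets XOR-ed in — but crucially this is non-adaptive, so I can define $x_i$ for each dishonest voter by a decoding rule (e.g. the $k$ for which his contributed bits most resemble the honest "$1/n$ fraction" pattern, or declare failure). I would argue that whatever the dishonest voters do, the aggregate parity $v[k]_j$ is distributed as the parity of $(t_k + d_k)$ independent Bernoulli$(1/n)$-ish bits for some integer $d_k$ determined by their (non-adaptive) strategy — or else the resulting $\sigma[k]$ will land outside every interval around a $p_v$ and the protocol fails. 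The algebraic identity to invoke is that the parity of $w$ independent Bernoulli$(1/n)$ bits equals $1$ with probability $\tfrac12\bigl(1-(1-2/n)^w\bigr)$, and one checks $p_v = \tfrac12\bigl(1-(\tfrac{n-2}{n})^v\bigr)$ after simplification — matching the formula in the protocol. So $\mathbb{E}[\sigma[k]] = p_{t_k + d_k}$.

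Then the concentration step: $\sigma[k]$ is an average of $n^2 s$ i.i.d.\ (or near-i.i.d.) bits, so by a Chernoff/Hoeffding bound $\Pr[\,|\sigma[k] - p_{t_k+d_k}| \ge \tfrac{1}{2e^2 n}\,] \le 2\exp(-\Omega(n^2 s / n^2)) = 2^{-\Omega(s)}$; a union bound over the $m \le n$ candidates keeps this exponentially small in $s$. Finally I would check the \emph{gap} condition: consecutive values $p_v$ and $p_{v+1}$ differ by more than $\tfrac{1}{e^2 n}$ (at least over the relevant range $v \le n$), so the interval of radius $\tfrac{1}{2e^2 n}$ around each $p_v$ is disjoint from the others, guaranteeing that if $\sigma[k]$ falls in such an interval the recovered $v$ is unambiguously $t_k + d_k$, and the check $\sum_k y[k] = n$ forces $\sum_k d_k = 0$ with all $d_k \ge$ (a valid count), i.e.\ a genuine assignment of one ballot to each dishonest voter. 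The main obstacle I anticipate is the second step: rigorously pinning down that a dishonest coalition, despite controlling arbitrary (correlated) bits across all $j$ for a given $k$ and across candidates, cannot do better than some fixed non-adaptive integer shift $d_k$ per candidate — this requires carefully using the simultaneous-broadcast property to rule out adaptivity and arguing that any strategy whose induced $\sigma[k]$ does not match some $p_{t_k+d_k}$ within tolerance simply triggers failure, which is consistent with the lemma's statement. The concentration and gap bookkeeping are routine by comparison.
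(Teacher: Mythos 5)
Your concentration and uniqueness bookkeeping matches the paper's: the closed form $p_v = \tfrac12\bigl(1-(\tfrac{n-2}{n})^v\bigr)$ is the same as the paper's expression, the gap bound $p_{v+1}-p_v = \tfrac{1-2p_v}{n} > \tfrac{1}{e^2 n}$ gives disjoint acceptance intervals, and a Chernoff bound on $\sigma[k]$ finishes that part exactly as in the paper. But the step you yourself flag as ``the main obstacle'' is precisely the heart of the lemma, and you leave it unresolved. The paper closes it with one specific observation that is absent from your proposal: \emph{a parity can be randomized but not derandomized}. For each position $j$, the honest voters who chose candidate $k$ contribute a parity that is $1$ with probability $p_{t_k} < \tfrac12$, and these honest shares are information-theoretically hidden from the coalition (at least one share stays with an honest voter) while the simultaneous broadcast forbids adapting to them. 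Hence whatever bits the coalition XORs in (whether by dishonest shares in Phase~A or by misreporting in Phase~B) can only move the probability that $v[k]_j=1$ from $p_{t_k}$ \emph{toward} $\tfrac12$, never below it. Consequently, if the protocol accepts, $\sigma[k]$ lies in the interval of some $p_{t_k+d_k}$ with $d_k \ge 0$; without this monotonicity argument nothing prevents, say, an accepted outcome with $y[k] < t_k$, which would contradict consistency with the honest ballots. Your alternative device --- a per-voter decoding rule assigning each dishonest voter the candidate his bits ``most resemble'' --- does not work as stated (a dishonest voter can spread $1$'s across several candidates or use a fraction like $2/n$), and it is also unnecessary: the lemma only asks for consistency of the \emph{tally}, which follows from $d_k \ge 0$ for every $k$ together with the check $\sum_k y[k] = n$.

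Two smaller points. Your final accounting ``$\sum_k d_k = 0$'' is wrong: since $\sum_k t_k$ equals the number of honest voters, the check $\sum_k y[k]=n$ forces $\sum_k d_k$ to equal the number of \emph{dishonest} voters, and with $d_k \ge 0$ this is exactly what yields a genuine assignment of one ballot per dishonest voter. Also, honest bits are not independent Bernoulli$(1/n)$ across $j$ (exactly $ns$ of the $n^2s$ positions are set to $1$), so the i.i.d.\ Chernoff step needs a word about the fixed-composition sampling; the paper is equally casual about this, and the negative dependence only helps concentration, so this is cosmetic rather than a flaw.
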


\begin{proof}
Our protocol is presented in a way that minimizes the number of
messages sent by each voter; it is perhaps best understood
intuitively in its sequential version. From this point of view, the
following is repeated~$n^2 s$ times. For each candidate, voters
create a distributed bit. The value of the distributed bit is~1 with
probability~$1/n$ if this is the candidate the voter chooses  and
always 0 otherwise. All voters compute the XOR of all their shares
and the result will eventually be simultaneously broadcast. The
probability that the parity of the broadcast value is~1 is directly
proportional to the number of voters voting for the candidate. By
repeating this process with each candidate $n^2 s$ times, we can
gather enough statistics to compute the vote exactly with very high
probability.

The only place a voter can deviate from the protocol is by creating
distributed bits with an inappropriate ratio of 0 and 1 values. We
first note that if the corrupted voters actually transmit the
correct number of private bits in \textbf{phase~A} and broadcast the
correct number of bits in \textbf{phase~B}, then whatever they
actually send is consistent with some global ratio of even and odd
distributed bits.

The ratio of even and odd distributed bits, when XORed, will give
rise to some probability of an even or an odd bit in the
simultaneous broadcast. It is possible to randomize the parity but
not to derandomize it: the corrupt participants altogether can
increase the probability of an odd broadcast but not make it
smaller. Because votes for each candidate are added up for a
consistency check, either the corrupted voters make a consistent
number of votes or otherwise the protocol will fail. %AB new
The use of a simultaneous broadcast channel ensures that the voter's
inputs are independent of each other.

In the rest of the proof, we give a detailed analysis, using a
Chernoff-type argument that the result of the vote will be correct
with overwhelming probability.

We fix a value~$k$ and suppose that~$v$ voters have input~$x_i=k$.
Thus we need to show that  in \textbf{Protocol~\ref{prot:vote1}},
$y[k] = v$,  except with probability exponentially small in~$s$.

Let us look at \textbf{phase~C} of the protocol. Let~$p_v$ be the
probability that $v[k]_j = 1$. For  $v \leq n$, we have~$p_0 = 0$,
$p_1 = \frac{1}{n}$ and $p_{v+1} = p_v\left(1-\frac{1}{n}\right) +
(1-p_v)\frac{1}{n}$. Solving this recurrence, we get
\begin{equation} p_v =
\frac{1}{2}\left(\frac{n-2}{n}\right)^v
\left(\left(\frac{n}{n-2}\right)^v -1\right) \,. \end{equation}
Thus, the idea of \textbf{phase~C} is for the participants to
approximate~$p_v$ by computing \mbox{$\sigma[k] = \sum_{i=1}^{n^2s}
v[k]_j/{n^2s}$}.
 If the approximation
is within~$\frac{1}{2e^2n}$ of~$p_v$, then the outcome is~$y[k]=v$.
 We first show that if such a~$v$ exists, it is
unique.

Clearly,  for $v< n$, we have that $p_{v+1} > p_v$. We also have
$\lim_{n \rightarrow \infty}p_n = \frac{1}{2} - \frac{1}{2e^2}$.
Thus the difference between $p_{v+1}$ and $p_v$ is:
\begin{align}
p_{v+1} - p_v &= p_v\left(1-\frac{1}{n}\right) + (1-p)\frac{1}{n}
-p_v \\
&= \frac{1-2p_v}{n} %\\
%&
> \frac{1-2p_n}{n}% \\&
>  \frac{1}{e^2n} \,.\end{align}

Hence if such a~$v$ exists, it is unique. We now show that except
with probability exponentially small in~$s$, the correct~$v$ will be
chosen. Let $X= \sum_{j=1}^{n^2 s} v[k]_j$ with~$\mu=n^2 s p_v$ the
expected value of~$X$. The participants have computed $\sigma[k] =
\frac{X}{n^2 s}$\,.

By the Chernoff bound, for any~\mbox{$0 < \delta \leq 1$},
\begin{equation}
\Pr[X \leq (1- \delta)\mu] < \exp(-\mu \delta^2/2)\,.
\end{equation}
Let $\delta = \frac{1}{2e^2np_v}$. We have
\begin{equation}
\Pr[X \leq \mu - \frac{n^2 s}{2e^2n}]  < \exp(-\frac{n^2
s}{8e^4n^2p_v})
\end{equation}
and so
\begin{equation}
\Pr[\sigma[k]_i-p_v \leq \frac{-1}{2e^2n}] <
\exp(-\frac{s}{8e^4p_v})
\end{equation}

Similarly, still by the Chernoff bound, for any~\mbox{$\delta < 2e
-1$},
\begin{equation}
\Pr[X > (1+\delta) \mu] < \exp(-\mu \delta^2/4)
\end{equation}
Let $\delta = \frac{1}{2e^2np_v}$ and we get
\begin{equation}
\Pr[X >  \mu + \frac{n^2 s}{2e^2n}] < \exp(\frac{-n^2
s}{16e^4n^2p_v})
\end{equation}
and so
\begin{equation}
\Pr[\sigma[k]_i-p_v > \frac{1}{2e^2n}] < \exp(\frac{-s}{16e^4
p_v})\,.
\end{equation}
Hence the protocol produces the correct value for~$y[k]$, except
with probability exponentially small in~$s$. \qed
\end{proof}

\begin{lemma}(Privacy)
\label{lem:privacy} In \textbf{Protocol~\ref{prot:vote1}}, no group
of corrupted voters can learn more than what they would have learned
in the ideal functionality, and this even if the protocol fails.
\end{lemma}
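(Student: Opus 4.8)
The plan is to show that the entire view of the corrupted coalition can be simulated given only the inputs of the corrupted voters and the output of the ideal functionality (the tally $y[1],\ldots,y[m]$, or a failure symbol). Let $C$ be the set of corrupted voters and $H$ the honest ones, with $|H|\geq 1$ (if $H=\emptyset$ there is nothing to prove). The transcript seen by $C$ consists of: (i) in Phase A, for each honest voter $i\in H$, each candidate $k$, and each index $j$, the shares $p_{ijk\ell}$ for $\ell\in C$; (ii) in Phase B, the simultaneously broadcast parities $q_{jk\ell}$ for every $\ell$ (including honest ones); plus of course $C$'s own randomness and inputs.

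\medskip

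First I would argue that the Phase A shares received by $C$ are trivially simulable: each honest voter picks the shares $\{p_{ijk\ell}\}_{\ell=1}^n$ uniformly at random subject to XORing to $p_{ijk}$, so any strict subset $\{p_{ijk\ell}\}_{\ell\in C}$ (with $C\neq\{1,\ldots,n\}$, which holds since $H\neq\emptyset$) is uniformly random and independent of $p_{ijk}$. The simulator just outputs fresh uniform random bits. The real content is Phase B. The key step is the second observation highlighted in the introduction: the broadcast value $q_{jk\ell}$ for an honest $\ell$ is $\bigoplus_i p_{ijk\ell}$, which includes honest contributions; and the only aggregate quantity that is fixed by the honest inputs is, for each $j,k$, the parity $\bigoplus_\ell q_{jk\ell} = \bigoplus_i p_{ijk} = \bigoplus_{i\in H}p_{ijk} \oplus \bigoplus_{i\in C}p_{ijk}$. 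I would show that, conditioned on $C$'s entire view of Phase A and on the honest inputs, the honest shares that have not been sent to $C$ are uniformly random subject only to fixing, for each honest voter $i$ and each $j,k$, the value $\bigoplus_{\ell\notin C} p_{ijk\ell}$ (a quantity determined by $p_{ijk}$ and the shares already revealed to $C$). Consequently the collection of honest broadcast bits $\{q_{jk\ell}\}_{\ell\in H}$ is uniformly random subject to the single linear constraint per $(j,k)$ that their XOR equals a fixed bit $b_{jk}$ depending only on honest inputs. Since the honest $p_{ijk}$ themselves are, per candidate $k$, just a random $ns$-subset pattern over $j$ when $x_i=k$ (and all-zero otherwise), the induced distribution of $(b_{jk})_j$ depends on the honest inputs only through the tally contribution of $H$ — equivalently, given $C$'s inputs, only through the full tally $y[k]$.

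\medskip

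So the simulator works as follows: send $C$ uniform random Phase A shares; obtain $y[1],\ldots,y[m]$ (or failure) from the ideal functionality; for each $k$, sample honest "broadcast-parity" bits consistently with a fresh virtual honest ballot-pattern realizing the honest-share of the tally implied by $y[k]$ and $C$'s declared inputs; then for each $(j,k)$ set the honest broadcast bits $\{q_{jk\ell}\}_{\ell\in H}$ uniformly at random subject to XORing to $b_{jk}$, and let $C$'s own (adaptively chosen) broadcast bits be whatever $C$ produces — note the simultaneity of the broadcast guarantees $C$'s bits cannot depend on the honest ones, so the joint distribution is reproduced faithfully. Finally I would observe that whether the protocol fails is a deterministic function of all the broadcast bits, hence of this simulated transcript, so "even if the protocol fails" is automatically covered — the failure event and everything learned in it is already a function of the simulated view. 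The main obstacle is making the conditional-distribution claim fully rigorous: one must carefully verify that after revealing $\{p_{ijk\ell}\}_{\ell\in C}$ the remaining honest shares are uniform subject only to the stated per-$(i,j,k)$ parity constraint, and that these constraints for different $i$ (honest) combine to give exactly one free linear constraint per $(j,k)$ on the honest broadcast bits with right-hand side a function of honest inputs alone — this is the precise formalization of "impossible for any strict subset to locally derandomize the parity," and everything else is bookkeeping plus the independence afforded by simultaneous broadcast.
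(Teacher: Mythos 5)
Your proposal is correct and follows essentially the same argument as the paper: the Phase~A shares seen by the coalition are uniform and independent of the hidden bits, Phase~C reveals nothing, and in Phase~B the coalition learns only, for each $(j,k)$, the parity $\bigoplus_{\ell \in H} q_{jk\ell}$ of the honest broadcasts, whose distribution depends on the honest inputs only through the honest tally and is therefore deducible from the ideal functionality's output. Your explicit simulator, the per-$(j,k)$ linear-constraint formulation, and the remark that failure is a deterministic function of the broadcast transcript simply make rigorous what the paper states informally.
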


\begin{proof}
No assumption is made about the number of dishonest voters. The case
where all voters are corrupted is
trivially private
and in the case where only one voter is honest, his vote can be
deduced even in the ideal functionality.

When more than one voter is honest, privacy requires that, even if
the tally of the honest voters is known, the individual ballots
remain private.

In  \textbf{phase~A}, as long as at least one voter is honest, the
value of each distributed bit is perfectly hidden. In
\textbf{phase~C}, no  information is sent. %
We thus have to concentrate on \textbf{phase~B} where the voters
broadcast their information regarding each parity. Let~$H$ be the
set of honest voters. The dishonest voters learn $\bigoplus_{\ell
\in H} q_{jk\ell}$ but no information on these individual values is
revealed. The dishonest voters can thus only evaluate the
probability that this value is 1 but this information could be
deduced from the output of the ideal functionality, for instance by
fixing the corrupt participants' inputs to~$1$. \qed
\end{proof}

It is important to note that the above results do not exclude the
possibility of corrupted voters causing the protocol to fail while
still learning some information as stipulated in
Lemma~\ref{lem:privacy}. This information could unfortunately be
used to adapt the behaviour of the corrupted voters in a future
execution of~\textbf{Protocol~\ref{prot:vote1}}.

%%%%%%%%%%%%%%%%%%%%%%%%%%%%%%%%%%%%%%%%%%%%%%%%%%%%%%%%%%%%%%%%%%%%%%%%%%%%%%%%%%%%%%
\section{Voting with Authorities}
\label{sec:votewithauthorities}

In this section, we introduce a variation of the previous  voting
protocol. Our motivation is to reduce the message complexity for the
voters and reduce the need of private channels by introducing a
relatively small number of voting authorities and by only requiring
voters to communicate with these authorities. Additionally, the
simultaneous broadcast is only required among the authorities. In
this section and the following, we say that a voter constructs a
distributed bit \emph{among the authorities} if the voter creates a
distributed bit as in section~\ref{sec:vote}, except that the shares
are distributed only among the authorities. Our protocol is given as
\textbf{Protocol~\ref{prot:vote2}}.

\begin{protocol}[h]
\caption{Voting with authorities} \label{prot:vote2}

{\bf Input:} $x_i \in \{1,\ldots,m\}$
%AB New
and security parameter $s$ \\
{\bf Output:} for $k=1$ to $m$, $y[k]= \abs{\{x_j \mid x_j=k \}}$

\vspace{4pt} \hrule \vspace{4pt}

\textbf{Phase A (cast)}

For each candidate $k=1$ to~$m$, \vspace{-.25cm}
\begin{enumerate}
\item \label{step:flip-1}
Each voter~$i$ sets the value of $n^2s$ bits~$p_{ijk}$ ($j =1,
\ldots ,n^2 s$) in the following way: if~$x_i \neq k$, then all bits
are $0$; otherwise, exactly $ns$ bits (a fraction $1/n$ of the
total) are randomly chosen such that $p_{ijk}=1$ and the rest such
that $p_{ijk}=0$.

\item \label{step:parity}
For each $j =1, \ldots ,n^2 s$, each voter $i$ constructs a
distributed bit \emph{among the authorities} with value $p_{ijk}$.
Let the shares of each distributed bit be denoted $\{p_{ijk\ell}\}$
($\ell = 1, \ldots r$)
\end{enumerate}

\textbf{Phase B (broadcast)}

All authorities $\ell$, for every $j$ and $k$ simultaneously
broadcast $q_{jk\ell}=\bigoplus_i p_{ijk\ell}$

\textbf{Phase C (tally)}

To compute the tally, $y[k]$, for each value $k=1, \ldots ,m$, each
participant sets: \mbox{$v[k]_j=\bigoplus_{\ell=1}^n q_{jk\ell}$},
\mbox{$\sigma[k] = \sum_{j=1}^{n^2 s} \frac{v[k]_j}{n^2s}$} and  if
there exists an integer~$v$ such that $\left|\sigma[k] - p_v \right|
< \frac{1}{2e^2n} \,$, where $p_v =
\frac{1}{2}\left(\frac{n-2}{n}\right)^v
\left(\left(\frac{n}{n-2}\right)^v -1\right)$, then $y[k]=v$\,.

If for any~$m$, no such value~$v$ exists, or if $\sum_{k=1}^m y[k]
\neq n$, the protocol fails.

Each authority broadcasts the outcome of the tally,
if there is any disagreement, the protocol fails.

\end{protocol}

The complexity of \textbf{Protocol~\ref{prot:vote2}} is as follows:
$n$ voters each create $mn^2s$ distributed bits, which are
distributed among~$r$ authorities,  for a total of $nr$ messages of
size $mn^2s$. \textbf{Phase B} requires a single simultaneous
broadcast among $r$ authorities, each sending a message of size~$m
n^2 s$. \textbf{Phase C} requires  $r$ broadcasts of size as most $m
\log n$.

\begin{lemma}(Correctness)
\label{lem2:correctness corrupt}
If at least one authority is
honest, and if \textbf{Protocol~\ref{prot:vote2}} does not fail, the
result of the vote is consistent with the vote of the honest voters
and some non-adaptive choice for the dishonest voters, except with
probability exponentially small in~$s$.
\end{lemma}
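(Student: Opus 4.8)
The plan is to reduce to the analysis already carried out for \textbf{Protocol~\ref{prot:vote1}} in Lemma~\ref{lem:correctness corrupt}, isolating the single genuinely new ingredient: the shares of the distributed bits are now held by the $r$ authorities rather than by the $n$ voters, so a coalition of dishonest \emph{authorities} (colluding with dishonest voters) could try to tamper with the broadcast values $q_{jk\ell}$. First I would fix a candidate $k$, let $v_H$ be the number of \emph{honest} voters with input $x_i=k$, and write $v[k]_j = h_j \oplus b_j$, where $h_j = \bigoplus_{i \in H} p_{ijk}$ is the contribution of the honest voters (here $H$ is the set of honest voters) and $b_j$ collects everything the adversary influences in \textbf{Phase~B}: the broadcasts of the dishonest authorities together with the dishonest voters' shares. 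The structural point is that, because at least one authority is honest, the dishonest authorities jointly hold at most $r-1$ of the $r$ shares of each honest voter's distributed bit, so those shares are uniform and independent of the bit $p_{ijk}$ itself; hence $\{h_j\}$ is a family of i.i.d.\ $\mathrm{Bernoulli}(p_{v_H})$ variables independent of $\{b_j\}$, where $p_{v_H}$ is exactly the quantity from the recurrence in the proof of Lemma~\ref{lem:correctness corrupt} (the probability that a XOR of $v_H$ bits, each $1$ with probability $1/n$, equals $1$). The simultaneous broadcast is what guarantees that the $b_j$ cannot depend on the honest authorities' broadcasts, and in particular cannot be chosen adaptively in the honest votes.

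Next I would exploit this decomposition exactly as in Lemma~\ref{lem:correctness corrupt}. Conditioned on $\{b_j\}$, the $v[k]_j$ are independent with $\Pr[v[k]_j = 1 \mid b_j] \in \{p_{v_H},\, 1-p_{v_H}\}$, both of which are at least $p_{v_H}$ since $p_v < 1/2$ for every $v$; so the Chernoff bounds already established apply verbatim and show that, except with probability exponentially small in $s$, $\sigma[k]$ lies within the stated tolerance of $\bar p_k := \frac{1}{n^2 s}\sum_j \Pr[v[k]_j = 1 \mid b_j] \ge p_{v_H}$. Combining this with the two facts proved in Lemma~\ref{lem:correctness corrupt} --- that $p_v$ is increasing and that consecutive values differ by more than $\frac{1}{e^2 n}$ --- gives that if the protocol does not fail then $y[k]$ is the unique integer $v^{(k)}$ with $|\bar p_k - p_{v^{(k)}}|$ small, and that necessarily $v^{(k)} \ge v_H$: the coalition can randomize the parity upward but cannot derandomize it, so honest ballots are never suppressed. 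A union bound over the $m$ candidates keeps the total error exponentially small in $s$.

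Finally I would close with the consistency check. Since $\sum_k v_H^{(k)} = \abs{H}$ and the protocol fails unless $\sum_k y[k] = n$, we get $\sum_k\bigl(v^{(k)} - v_H^{(k)}\bigr) = n - \abs{H}$, a sum of non-negative integers equal to the number of dishonest voters; hence $(v^{(k)} - v_H^{(k)})_{k=1}^m$ is a legal ballot profile for the dishonest voters, and because the $b_j$ were committed in the simultaneous broadcast before the tally was known, this profile is non-adaptive. The extra step in which each authority rebroadcasts its outcome is immaterial to correctness, since all participants compute $v[k]_j$ from the same broadcast bits and therefore agree.

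I expect the main obstacle to be the first paragraph: pinning down precisely in what sense $\{b_j\}$ is independent of $\{h_j\}$ when dishonest voters and dishonest authorities collude and may choose their shares and broadcasts as arbitrary randomized functions of their joint view. The clean way to handle this is to observe that any set of at most $r-1$ shares of an honest distributed bit is distributed uniformly and independently of the bit, and that the simultaneity of \textbf{Phase~B} removes any dependence of the dishonest broadcasts on the honest ones; once this is stated carefully, the remainder is a direct replay of Lemma~\ref{lem:correctness corrupt}.
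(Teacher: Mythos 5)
Your proposal is correct and takes essentially the same route as the paper: the paper's proof of Lemma~\ref{lem2:correctness corrupt} consists of replaying the proof of Lemma~\ref{lem:correctness corrupt} with the shares now held by the $r$ authorities instead of the $n$ voters, which is exactly the reduction you carry out. The only difference is one of detail, not of approach --- you make explicit what the paper leaves informal (the decomposition $v[k]_j = h_j \oplus b_j$, the independence coming from a single honest authority holding one of the $r$ shares together with the simultaneity of \textbf{Phase~B}, and the ``randomize but never derandomize'' argument combined with the consistency check $\sum_k y[k]=n$), so your write-up is a fleshed-out version of the paper's argument rather than a different one.
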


\begin{proof}
The proof is obtained by replacing voters by authorities at the
appropriate place in proof of Lemma~\ref{lem:correctness corrupt}.
It is important here that the correctness probability only depends
on~$s$ and not on the number of voters or authorities. \qed
\end{proof}

\begin{lemma}(Privacy)
\label{lem2:privacy} In \textbf{Protocol~\ref{prot:vote2}}, if at
least one authority is honest, no collusion of dishonest voters and
authorities can learn more than what they would have learned in the
ideal functionality, and this even if the protocol fails.
\end{lemma}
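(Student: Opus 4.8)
The plan is to adapt the proof of Lemma~\ref{lem:privacy}, with an honest authority taking over the role that the honest voters collectively played there. The cases where every voter is corrupt, or where only one voter is honest, are disposed of exactly as in that lemma (in the latter case the honest vote is already determined by the output of the ideal functionality), so assume at least two honest voters; write $H$ for the honest voters, $H_A$ for the honest authorities, and fix some $\ell^*\in H_A$ (nonempty by hypothesis). It suffices to construct a simulator which, given only the inputs submitted by the dishonest voters to the ideal functionality and the resulting output $y[\cdot]$, reproduces the exact distribution of the coalition's view; since whether the protocol fails is a deterministic function of the broadcast values used in \textbf{Phase~C}, perfect simulation of the whole transcript automatically takes care of the ``even if the protocol fails'' clause.

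I would organise the simulation phase by phase. In \textbf{Phase~A}, for each honest voter $i$ and each $j,k$ the shares $(p_{ijk\ell})_{\ell=1}^r$ are uniform subject to $\bigoplus_{\ell}p_{ijk\ell}=p_{ijk}$; because $\ell^*$ is honest, the shares delivered to the (strict) set of dishonest authorities are jointly uniform and independent of $p_{ijk}$ and of everything else, so the simulator samples them uniformly at random. \textbf{Phase~C} reveals only the per-authority re-announcement of $y[\cdot]$, which is a function of $y[\cdot]$. This isolates \textbf{Phase~B}. There, for each $j,k$ the coalition knows $q_{jk\ell}$ for every $\ell$, hence learns $v[k]_j=\bigoplus_{\ell}q_{jk\ell}=\bigoplus_{i=1}^n p_{ijk}$; after removing its own contribution $\bigoplus_{i\notin H}p_{ijk}$ this is the honest aggregate $\bigoplus_{i\in H}p_{ijk}$. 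Conditioned on the latter, the honest-voter contributions $\bigl(\bigoplus_{i\in H}p_{ijk\ell}\bigr)_{\ell\in H_A}$ to the broadcasts of the honest authorities are uniform over $\{0,1\}^{\abs{H_A}}$ subject only to their (now fixed) XOR, that is, $\abs{H_A}-1$ bits of fresh noise. So the coalition's Phase~B view is a deterministic function of $\{v[k]_j\}$, the self-chosen dishonest shares, and independent uniform bits; the simulator produces the noise trivially and recovers the dishonest shares from its own state, so all that is left is to sample $\{v[k]_j\}$.

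For this last step I would use that each honest voter contributes, over the $n^2s$ rounds, a uniformly random weight-$ns$ string to the single candidate he votes for and the all-zero string to the others; hence for each $k$ the vector $(v[k]_j)_j$ equals the coordinatewise XOR of $v^H_k$ independent uniform weight-$ns$ strings plus the (known) dishonest contribution, where $v^H_k=y[k]-\abs{\{i\notin H:x_i=k\}}$ is computable by the simulator, and these vectors are independent across $k$. Sampling accordingly completes a simulator whose output is identically distributed to the real view and which used only the ideal output, proving the lemma. I expect the Phase~B analysis to be the delicate part: one must argue cleanly that the honest authorities' broadcasts leak nothing about honest ballots beyond the aggregates $v[k]_j$ (the decomposition of the honest share vectors into ``the useful bit'' plus uniform randomness), and that the joint law of the $v[k]_j$'s depends on the honest inputs only through the counts $v^H_k$, which the ideal functionality hands to the simulator. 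The remainder is a faithful transcription of Lemma~\ref{lem:privacy} with authorities substituted for voters, mirroring how Lemma~\ref{lem2:correctness corrupt} was obtained from Lemma~\ref{lem:correctness corrupt}.
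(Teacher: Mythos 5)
Your proposal is correct and follows essentially the same route as the paper: Phase~A shares reaching the dishonest parties are uniform because at least one authority (there, one voter) is honest, Phase~C adds nothing, and the Phase~B broadcasts reveal only the honest aggregates, whose distribution is determined by the per-candidate honest counts and hence by the ideal-functionality output. You merely make explicit, via a simulator and the weight-$ns$ sampling argument, what the paper's proof (which just defers to Lemma~\ref{lem:privacy} with authorities substituted for voters) states informally, so the extra detail is a welcome formalization rather than a different proof.
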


\begin{proof}
The proof is very  similar to the proof of Lemma~\ref{lem:privacy}.
In \textbf{Protocol~\ref{prot:vote2}}, part of the work performed by
the voters in \textbf{Protocol~\ref{prot:vote1}} is done by the
authorities. If at least one authority is honest, there is no way
dishonest participants (voters or authorities) can learn any
information about the value of the distributed bit created by an
honest voter. The rest of the argument is the same as in
Lemma~\ref{lem:privacy}.\qed
\end{proof}

Note that in \textbf{Protocol~\ref{prot:vote2}}, any participant can
make the protocol fail. Voters can do this, for instance, by setting
an abnormally high number of distributed bits to~1, and authorities
can do this by changing their inputs into the simultaneous
broadcast. Furthermore, note  that in \textbf{Phase~B}, although the
simultaneous broadcast happens among the authorities, it is not a
problem if the voters are passive listeners. At the end of
\textbf{Phase~C}, the authorities broadcast the result of the tally.
We required unanimity of these messages in order to declare that the
protocol has succeeded.

%%%%%%%%%%%%%%%%%%%%%%%%%%%%%%%%%%%%%%%%%%%%%%%%%%%%%%%%%%%%%%%%%%%%%%%%%%%%%%%%%%%%%%
\section{Voting with Authorities and Verification}
\label{sec:votewithverification}

One of the issues with the previous two protocols is that any voter
can cause them to fail by introducing noise. In this section, we use
the cut-and-choose technique, augmented with an equality test,  to
allow authorities to revoke a noisy ballot. This is done by having
each voter distribute many encrypted but identical \emph{votes},
where a \emph{vote} is $k$ lists of $n^2s$ bits (as created, for
instance, in step~1 of  \textbf{Phase A} of
\textbf{Protocol~\ref{prot:vote2}}). A vote is \emph{correct} if its
contents correspond to the construction of step~1 of \textbf{Phase
A} of \textbf{Protocol~\ref{prot:vote2}}, i.e.~all bits are even
except one candidate which has exactly $ns$ bits sets to~1. The
authorities then open half of the votes and verify the correctness;
a subsequent step will ensure that the unopened votes are equal,
thus providing exponential security.

Our protocol is presented as \textbf{Protocol~\ref{prot:vote3}}, in
which the authorities use the following two  simple routines.

\emph{Random choices:} authorities can generate common random bits
in the following way. Each authority locally generates a random bit,
after which all authorities simultaneously broadcast these bits. The
common random bit is set to be the parity of the broadcast bits.
Obviously, this value is truly random if at least one authority is
honest. This process can be done in parallel, requiring only one
simultaneous broadcast.

\emph{Distributed bit equality:} suppose the authorities share two
distributed bits. They can verify if these two distributed bits have
the same value without revealing this value. Let
$a=\bigoplus_{i=1}^r a_i$ and $b= \bigoplus_{i=1}^r b_i$ be the two
distributed bits. Each authority~$i$ simultaneously broadcasts
$c_i=a_i \oplus b_i$. If~$\bigoplus_{i=1}^r c_i=0$ then the
distributed bits are equal (unless an authority is cheating). A
dishonest authority can make the protocol output the wrong answer,
but under no circumstance will  this process reveal any information
about the values of $a$ or~$b$.

\begin{protocol}[h]
\caption{Voting with authorities and verification}
\label{prot:vote3}

{\bf Input:} $x_i \in \{1,\ldots,m\}$
%AB New
and security parameter $s$  \\
{\bf Output:} for $k=1$ to $m$, $y[k]= \abs{\{x_j \mid x_j=k \}}$ as
well as a list of voters with revoked ballots
 \vspace{4pt} \hrule \vspace{4pt}

\textbf{Phase A (randomness)}

The authorities generate enough common random bits.

\textbf{Phase B (verification and vote casting)}

For each voter: \vspace{-.25cm}
\begin{enumerate}
\item Each voter executes step~1 of \textbf{Phase A} of
\textbf{Protocol~\ref{prot:vote2}}, thus creating one \emph{vote}.

\item $2s$ copies of the vote are made, and for each vote,  the shares of the distributed
bits are computed as in step~2 of \textbf{Phase A} of
\textbf{Protocol~\ref{prot:vote2}} (the shares are independently
randomly chosen).
\item  Each vote is encrypted with two random
permutations: the first permutation changes the order of the $k$
candidates, and the second permutation changes the order of the
$n^2s$ distributed bits (the same permutation is applied for each
candidate within a vote).
\item \label{prot3:distribute-votes}The shares of the encrypted votes are distributed among the
authorities.

\item \label{prot3:simbroadcast} The authorities randomly choose $s$ votes and
simultaneously broadcast all bits involved in these votes.

\item If any of the opened votes is not correct, the voter's ballot is revoked.

\item \label{prot3:reveal}Each authority reveals to the voter which votes were opened.
If the voter receives  inconsistent messages, his ballot is revoked.

\item \label{prot3:voter-reveal} For the $s$ remaining votes, the voter reveals to the authorities  both the
permutation that was applied on the distributed bits and the
permutation that was applied on the candidates. The authorities
permute their shares of the remaining votes so that all votes are
equal.

\item \label{prot3:dist-equality}The authorities perform \emph{distributed bit equality} tests between each
distributed bit of the first remaining vote and all corresponding
distributed bits for all other remaining votes. If any of these
tests fail, then the voter's ballot is revoked. If all tests
succeed, all but the first remaining vote are discarded.
\end{enumerate}
 \vspace{-.25cm}
\textbf{Phase C (broadcast and tally)}

\textbf{Phases B} and \textbf{C} of
\textbf{Protocol~\ref{prot:vote2}} are performed with  all remaining
non-revoked votes.

\end{protocol}

Note that in \textbf{Protocol~\ref{prot:vote3}}, any dishonest
authority can make the protocol fail and any authority can
dishonestly revoke any voter's ballot.

The complexity of \textbf{Protocol~\ref{prot:vote3}} is as follows:
each of the $n$ voters sends $r$~messages of size  $2 m n^2 s^2$ for
the votes (step~\ref{prot3:distribute-votes}) and $r$~messages of
size $n^2 s^2 \log(n^2 s)+sm\log(m)$ for the permutations
(step~\ref{prot3:voter-reveal}). In order to generate enough random
bits, the authorities are involved in a single simultaneous
broadcast of size $n \log(\binom{2s}{s}) \in O(ns)$. For the rest of
the protocol, the $r$ authorities are involved in
step~\ref{prot3:simbroadcast} in a simultaneous broadcasts of size
$m n^2 s^2$ for each voter; in step~\ref{prot3:reveal}, they require
a message of size~$s$ for each voter, and in
step~\ref{prot3:dist-equality}, they broadcast $(s-1)mn^2s$ bits.
\textbf{Phase C} requires one last simultaneous broadcast of size $m
n^2 s$ as well as $r$ broadcasts of size as most $m \log n$.

\begin{lemma}(Correctness) \label{lem3:correctness corrupt}
If at least one authority is
honest, and if \textbf{Protocol~\ref{prot:vote3}} does not fail,
then every ballot that is not revoked is correctly counted except
with probability exponentially small in~$s$.
\end{lemma}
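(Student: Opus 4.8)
The plan is to reduce the correctness of \textbf{Protocol~\ref{prot:vote3}} to the correctness already established for \textbf{Protocol~\ref{prot:vote2}} (Lemma~\ref{lem2:correctness corrupt}), by arguing that every non-revoked ballot that enters \textbf{Phase~C} is, except with probability exponentially small in~$s$, a \emph{correct} vote in the sense defined just before the protocol. Once that is in hand, \textbf{Phase~C} is literally \textbf{Phases~B} and~\textbf{C} of \textbf{Protocol~\ref{prot:vote2}} run on a set of correct votes, so the earlier analysis gives the tally exactly (up to the usual $2^{-\Omega(s)}$ failure probability), and there is nothing more to prove. So the whole content is: ``surviving $\Rightarrow$ correct, whp.''

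First I would fix an honest authority~$A^\ast$; its presence is what makes the common random choices of \textbf{Phase~A} genuinely uniform and what makes the \emph{distributed bit equality} tests meaningful. Next, consider a fixed (possibly dishonest) voter whose ballot is not revoked. After step~\ref{prot3:voter-reveal} the authorities have re-permuted their shares so that all $s$ surviving votes are claimed to be equal; after step~\ref{prot3:dist-equality} the equality tests between the first surviving vote and each of the other $s-1$ have all passed. I would argue in two steps. (i) \emph{Cut-and-choose:} the voter committed to $2s$ votes before the random subset was chosen (the shares were distributed in step~\ref{prot3:distribute-votes} before the random choice in step~\ref{prot3:simbroadcast}, and the simultaneous broadcast guarantees independence), and the $s$ opened ones were all found correct; hence if $t$ of the $2s$ committed votes were \emph{incorrect}, the probability that none of them landed in the opened half is at most $\binom{2s-t}{s}/\binom{2s}{s}$, which for $t\ge 1$ is at most $1/2$ and in general is $\le 2^{-t}$ — so with probability $1-2^{-\Omega(s)}$ either the ballot is revoked or \emph{at least $s$ of the committed votes (in particular a majority of the surviving ones) are correct}. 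Wait — more carefully: if the ballot survives, all $s$ opened votes are correct, so at most $s$ of the $2s$ are incorrect, i.e.\ at least $s$ of the surviving ones could still be bad; cut-and-choose alone is not enough, which is exactly why the equality test is there. (ii) \emph{Equality test:} because $A^\ast$ is honest, the XOR $\bigoplus_i c_i$ computed in a \emph{distributed bit equality} test equals the true XOR of the two distributed bits, so a passing test really does certify that the two distributed bits are equal. Therefore, if all $s-1$ tests pass, the first surviving vote is bit-for-bit equal to every other surviving vote. Combining (i) and (ii): among the surviving votes there is at least one correct vote (since at most $s$ of all $2s$ are incorrect and... ) — here I need to be slightly more precise, so let me restructure: if the ballot survives, all $s$ opened votes are correct; if \emph{fewer than $s$} of the unopened votes are incorrect then at least one unopened (surviving) vote is correct, and by the equality tests \emph{all} surviving votes, including the one kept, equal that correct one, hence the kept vote is correct; and the bad event ``$s$ opened correct but all $s$ unopened incorrect'' requires all $s$ incorrect votes to avoid the opened half, which happens with probability $\binom{s}{s}/\binom{2s}{s}=1/\binom{2s}{s}=2^{-\Omega(s)}$.

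The main obstacle, and the step to handle carefully, is the interaction between cut-and-choose and the equality test — making precise that it is the \emph{conjunction} (``all opened correct'' from step~6 \emph{and} ``all unopened mutually equal'' from step~\ref{prot3:dist-equality}) that forces the kept vote to be correct, and bounding the single bad event where a voter slips $s$ bad votes past both checks by exactly $1/\binom{2s}{s}$. I would also note two smaller points: the permutations applied in step~3 and undone in step~\ref{prot3:voter-reveal} do not affect correctness (a permuted correct vote is still correct, and the authorities recover the canonical form), and revocation is consistent across participants because step~\ref{prot3:reveal} forces agreement on which votes were opened while the simultaneous broadcasts make the opened bits and the equality-test bits common knowledge, so all honest authorities agree on the revoked set. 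Finally, a union bound over the at most~$n$ voters keeps the total probability that any surviving ballot is incorrect at $n\cdot 2^{-\Omega(s)} = 2^{-\Omega(s)}$; feeding the resulting set of correct votes into the \textbf{Protocol~\ref{prot:vote2}} analysis and applying Lemma~\ref{lem2:correctness corrupt} (together with another union bound over the $m$ candidates, already absorbed there) completes the argument.
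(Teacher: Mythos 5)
There is a genuine gap, and it sits exactly at your step~(ii). You claim that because one authority $A^\ast$ is honest, the value $\bigoplus_i c_i$ computed in a \emph{distributed bit equality} test equals the true XOR of the two distributed bits, so a passing test certifies equality. That is false in the adversarial model of this lemma, which allows up to $r-1$ dishonest authorities: a dishonest authority can broadcast $c_i \neq a_i \oplus b_i$ and flip the outcome, making unequal distributed bits pass the test (the paper says this explicitly when introducing the routine: ``a dishonest authority can make the protocol output the wrong answer''). The same problem affects your step~(i): in step~5 of \textbf{Phase~B} the authorities open votes by broadcasting \emph{their own shares}, so dishonest authorities colluding with a dishonest voter can misreport shares and make an incorrect opened vote look correct. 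Consequently your central reduction, ``non-revoked $\Rightarrow$ correct vote, except with probability $1/\binom{2s}{s}$,'' simply does not hold when dishonest authorities cooperate with a dishonest voter: an incorrect ballot can survive verification with probability~$1$, not $2^{-\Omega(s)}$. Your cut-and-choose/equality analysis is the right argument for Lemma~\ref{lem3:robustness} (robustness against coalitions of \emph{voters alone}, where the authorities carry out the checks faithfully), but it cannot carry the correctness lemma, whose hypothesis only guarantees a single honest authority.

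The paper's route avoids this entirely: \textbf{Phase~C} of \textbf{Protocol~\ref{prot:vote3}} is literally \textbf{Phases~B} and~\textbf{C} of \textbf{Protocol~\ref{prot:vote2}}, and the correctness argument of Lemma~\ref{lem2:correctness corrupt} already tolerates arbitrary (possibly ``incorrect'') contributions from dishonest parties — such contributions either correspond to some fixed, non-adaptive effective vote or trip the consistency check $\sum_k y[k]=n$ and make the protocol fail, which is excluded by the lemma's hypothesis. The verification phase is then only invoked as something that ``makes the protocol more robust'' (and, one should add, reveals nothing usable for adaptive voting, since the openings and tests go through simultaneous broadcasts). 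So the fix to your write-up is structural: do not try to prove that surviving ballots are correct votes under a dishonest-authority adversary; instead inherit Lemma~\ref{lem2:correctness corrupt} directly, and reserve your $\binom{2s-t}{s}/\binom{2s}{s}$ counting for the robustness statement where only voters are corrupt. Your side remarks (permutations preserve correctness, consistency of the revoked set via step~\ref{prot3:reveal}, union bound over voters) are fine.
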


\begin{proof}
The proof is  identical to the proof of Lemma~\ref{lem2:correctness
corrupt}.
The verification of the vote only makes the protocol more
robust. \qed
\end{proof}

\begin{lemma}(Privacy)
\label{lem3:privacy} In \textbf{Protocol~\ref{prot:vote3}}, if at
least one authority is honest, no collusion of dishonest voters and
authorities can learn more than what they would have learned in the
ideal functionality.
\end{lemma}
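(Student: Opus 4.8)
The plan is to reduce the privacy of \textbf{Protocol~\ref{prot:vote3}} to that of \textbf{Protocol~\ref{prot:vote2}} (Lemma~\ref{lem2:privacy}), by arguing that every additional message sent during \textbf{Phase~B} of \textbf{Protocol~\ref{prot:vote3}} leaks nothing about the ballot of any honest voter beyond what is already available in the ideal functionality. First I would fix an honest authority (which exists by hypothesis) and note, exactly as in Lemma~\ref{lem:privacy}, that the shares of a distributed bit created by an honest voter are perfectly hidden from any coalition that misses at least one share-holder; hence the encrypted votes distributed in step~\ref{prot3:distribute-votes} reveal nothing. The remaining task is to go through steps~\ref{prot3:simbroadcast}--\ref{prot3:dist-equality} one at a time and check that an honest voter's \emph{unopened} vote stays hidden.

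The key observations I would make are: (i) in step~\ref{prot3:simbroadcast} only $s$ of the $2s$ copies are opened, and since each copy is an \emph{independent} fresh sharing of an \emph{independently} re-encrypted vote, opening those $s$ copies is independent of the contents of the other $s$ copies — so the revealed bits are just $s$ independent correct votes with independently random permutations applied, which can be simulated without knowing $x_i$ (they always look like ``$ns$ ones in one random candidate slot, zeros elsewhere, in random bit positions''); (ii) step~\ref{prot3:reveal} only communicates \emph{which} votes were opened, information the adversary already has; (iii) in step~\ref{prot3:voter-reveal} the honest voter reveals the permutations only for the $s$ \emph{surviving} votes, and the authorities permute their shares accordingly — but this is a local relabelling of share indices and does not broadcast any share value, so the surviving distributed bits remain perfectly hidden (each still missing the honest authority's share); (iv) in step~\ref{prot3:dist-equality} the \emph{distributed bit equality} routine, as established in the text, reveals nothing about the common value, only whether a pair of distributed bits agree — and for an honest voter all the surviving votes are equal by construction, so every test outputs ``equal'' regardless of $x_i$, hence the transcript of these tests is a fixed string independent of the ballot. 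After step~\ref{prot3:dist-equality} only one copy per voter remains, and \textbf{Phase~C} is literally \textbf{Phases~B} and \textbf{C} of \textbf{Protocol~\ref{prot:vote2}}, whose privacy is Lemma~\ref{lem2:privacy}; composing the simulators finishes the argument.

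The main obstacle I anticipate is item~(i): one must argue carefully that the cut-and-choose step does not correlate the opened and unopened copies in a way the adversary can exploit. The point to nail down is that the choice of which $s$ copies to open (step~\ref{prot3:simbroadcast}) is made by a common random string generated in \textbf{Phase~A} \emph{before} any shares are seen, and — crucially — that it is fixed regardless of what the adversary does, so the adversary cannot adaptively pick which copies to open based on partial information; the simultaneous broadcast in \textbf{Phase~A} is exactly what guarantees this. Combined with the independence of the re-randomizations across the $2s$ copies, the opened copies are simulable and the unopened ones retain the perfect hiding of \textbf{Protocol~\ref{prot:vote2}}. A minor secondary point is that a dishonest \emph{voter} revealing inconsistent permutations in step~\ref{prot3:voter-reveal} or inconsistent ``which-were-opened'' messages in step~\ref{prot3:reveal} can at worst get his own ballot revoked, which is behaviour already permitted in the ideal functionality, so it does not affect the privacy of honest voters. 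Assembling these pieces gives the claimed privacy guarantee.
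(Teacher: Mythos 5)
Your proposal is correct and follows essentially the same route as the paper: the paper's proof simply observes that the verification phase (\textbf{Phase~B} of \textbf{Protocol~\ref{prot:vote3}}) reveals nothing about the voters' choices and then invokes Lemma~\ref{lem2:privacy} for the rest, which is exactly your reduction, only stated without your step-by-step justification. Your detailed walk-through of steps~\ref{prot3:simbroadcast}--\ref{prot3:dist-equality} (opened votes simulable thanks to the unrevealed candidate permutation, unopened votes still secret-shared with an honest authority, equality-test transcripts independent of the ballot value) is a faithful elaboration of that one-line observation rather than a different argument.
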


\begin{proof}
To see that privacy of the vote is guaranteed if at least one
authority is honest, we first observe that \textbf{phase B} of the
protocol does not reveal information about the voters' choice; it
only ensures correctness of the vote. Once this phase is done, the
rest of the protocol is identical to
\textbf{Protocol~\ref{prot:vote2}} and the same argument as in
Lemma~\ref{lem2:privacy} can be used here. \qed
\end{proof}

As mentioned at the beginning of this section, in
\textbf{Protocols~\ref{prot:vote1}} and \textbf{\ref{prot:vote2}}, a
voter can vote in an inconsistent way, causing the protocol to fail
with very high probability. In \textbf{Protocol~\ref{prot:vote3}}
the votes are verified: if a vote is not correct, there is only a
probability exponentially small in~$s$ that the vote will not be
revoked. Thus, dishonest voters can only make the protocol fail with
exponentially small probability in $s$. We formalize this below.

\begin{lemma}(Robustness)
\label{lem3:robustness} No coalition of voters can alone make the
protocol fail, except with exponentially small probability in~$s$.
\end{lemma}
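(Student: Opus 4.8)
The plan is to argue that if a voter behaves so as to force a failure in \textbf{Phase C}, then the votes that voter distributed cannot all be correct and equal, and such a discrepancy is caught in \textbf{Phase B} with probability exponentially close to~$1$. First I would observe, using Lemma~\ref{lem3:correctness corrupt}, that once all surviving ballots are \emph{correct} (in the sense defined before \textbf{Protocol~\ref{prot:vote3}}), \textbf{Phase C} behaves exactly as \textbf{Phase B}/\textbf{C} of \textbf{Protocol~\ref{prot:vote2}} on honest-format inputs, so the tally succeeds except with probability exponentially small in~$s$; in particular the consistency check $\sum_{k=1}^m y[k]=n$ passes, since every non-revoked correct ballot contributes exactly one voice. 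Hence the only way a coalition of voters can force failure (beyond this negligible event) is to have some non-revoked ballot that is \emph{not} correct, i.e.\ a voter whose surviving vote survives both the cut-and-choose opening of step~\ref{prot3:simbroadcast} and the \emph{distributed bit equality} tests of step~\ref{prot3:dist-equality} while not being a valid vote.

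Next I would fix such a dishonest voter and analyse the cut-and-choose. The voter submitted $2s$ copies; the authorities open a uniformly random subset of $s$ of them (the choice is truly random since at least one authority is honest, by the \emph{Random choices} routine). Suppose first that at least $s$ of the $2s$ submitted votes are \emph{incorrect}. Then the probability that none of these incorrect votes lands in the opened half is at most $\binom{s}{s}/\binom{2s}{s}=1/\binom{2s}{s}$, which is $2^{-\Omega(s)}$; so with overwhelming probability the ballot is revoked in step~6. Now suppose fewer than $s$ of the submitted votes are incorrect, so at least $s+1$ are correct — but among these correct ones they need not all encode the \emph{same} candidate, and the adversary only needs the $s$ surviving (unopened) votes to be mutually equal yet jointly invalid. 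Here the key point is that the \emph{distributed bit equality} tests in step~\ref{prot3:dist-equality} compare the first surviving vote against each of the other $s-1$ surviving votes, bit by bit, after the voter-supplied permutations have been applied. If the surviving votes are not all identical (as distributed objects, after permutation), some equality test fails unless a \emph{dishonest authority} lies in that test — but the lemma concerns coalitions of \emph{voters alone}, so with an honest authority present every such test is faithful, and the ballot is revoked. Therefore, to avoid revocation, the adversary is forced to have all $s$ surviving votes equal to a single common vote $V$; and if $V$ is incorrect, then \emph{all} $2s$ submitted copies that are not among the opened-and-correct ones must also equal this incorrect $V$, so at least $s$ of the $2s$ submitted votes are incorrect — putting us back in the first case, where revocation happens except with probability $2^{-\Omega(s)}$.

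Combining, I would union-bound over the $n$ voters: the probability that \emph{some} dishonest voter evades revocation while having submitted an incorrect surviving ballot is at most $n/\binom{2s}{s}=2^{-\Omega(s)}$ (absorbing the polynomial factor~$n$). Conditioned on the complementary event, every non-revoked ballot is correct, and then by the first paragraph \textbf{Phase C} does not fail except with probability $2^{-\Omega(s)}$. Adding the two negligible terms gives the claim. The main obstacle I anticipate is the second case of the cut-and-choose argument: making precise that equality of the surviving votes (enforced by step~\ref{prot3:dist-equality} under an honest authority) together with correctness of at least one opened-in-expectation copy forces the surviving common vote to be correct, so that any attempt to smuggle through an incorrect-but-equal block of $s$ votes necessarily leaves $\geq s$ incorrect copies among the $2s$ submitted and is thus caught by the opening. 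I would also need to be slightly careful that a voter cannot adaptively choose which votes to keep: the opened set is selected by the authorities' common randomness \emph{after} the voter has committed (distributed) all $2s$ votes in step~\ref{prot3:distribute-votes}, so no such adaptivity is available.
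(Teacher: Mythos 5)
Your proposal is correct and follows essentially the same route as the paper's proof: the paper's (much terser) argument is exactly the cut-and-choose observation that an unrevoked incorrect ballot requires all $s$ opened votes to be correct while the $s$ unopened ones are incorrect yet identical, an event of probability exponentially small in~$s$. Your write-up merely fills in the details the paper leaves implicit (the $1/\binom{2s}{s}$ bound, the union bound over voters, non-adaptivity from committing the $2s$ copies before the authorities' common random choice, and the fact that correct surviving ballots make the Phase~C consistency check pass except for the Chernoff-type error), so no further comparison is needed.
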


\begin{proof}
The only way for a voter not to provide the correct information in
\textbf{phase~B} is to generate incorrect votes. Since half of the
votes are opened, and the other half is checked for equality, the
only way for a voter to successfully provide an incorrect ballot is
for the $s$ opened votes to be correct and the $s$ remaining votes
to be incorrect, yet identical. This happens with exponentially
small probability in $s$. \qed
\end{proof}

\section{Conclusion}

We presented three voting scheme with unconditional security and
information-theoretic correctness,  without assuming any bound on
the number of corrupt voters or voting authorities. For this to
succeed, we had to assume pairwise private channels and a
simultaneous broadcast channel (as discussed, this assumption can be
replaced by temporary computational assumptions, yielding
everlasting security). We also had to allow any participant to cause
the protocol to fail. Fortunately, we were able to relax some of the
above assumptions in \textbf{Protocols~\ref{prot:vote1}}
and~\textbf{\ref{prot:vote3}} by introducing a set of voting
authorities.

We are currently considering a tradeoff between the revoking power
of authorities and the correctness of the protocol. This can be
achieved as a modification of \textbf{Protocol~\ref{prot:vote3}} by
randomly grouping the authorities and by performing the protocol in
parallel within each group.

Although our initial motivation was of theoretical nature, we
believe that this work might lead to interesting practical
applications.

\section*{Acknowledgements}
%AB new
The authors wish to thank S\'ebastien Gambs for proofreading and
Jeroen van de Graaf for suggesting that  we write up and submit our
ideas.
%AB any funding?

\bibliographystyle{alpha}
\bibliography{references}

\end{document}